\theoremstyle{plain}
\newtheorem{theorem}{Theorem}[section]
\newtheorem{lemma}[theorem]{Lemma}
\theoremstyle{definition}
\theoremstyle{remark}
\title{Improving Online Algorithms via ML Predictions\thanks{The conference version~\cite{KPS18} of this work appeared in NeurIPS 2018.}}
\author{
  Ravi Kumar \\
  Google \\
  \texttt{ravi.k53@gmail.com} \\
  \and
  Manish Purohit \\
  Google  \\
  \texttt{mpurohit@google.com} \\
  \and
  Zoya Svitkina \\
  Google  \\
  \texttt{zoya@google.com} \\
}
\begin{document}

\maketitle
\begin{abstract}
In this work we study the problem of using machine-learned predictions to improve the performance of online algorithms.  We consider two classical problems, ski rental and non-clairvoyant job scheduling, and obtain new online algorithms that use predictions to make their decisions.  These algorithms are oblivious to the performance of the predictor, improve with better predictions, but do not degrade much if the predictions are poor.
\end{abstract}

\newcommand{\para}[1]{\medskip \par \noindent {\bf #1.}\/}

\section{Introduction}

Dealing with uncertainty is one of the most challenging issues that real-world computational tasks, besides humans, face.  Ranging from ``will it snow next week?'' to ``should I rent an apartment or buy a house?'', there are questions that cannot be answered reliably without some knowledge of the future.  Similarly, the question of ``which job should I run next?'' is hard for a CPU scheduler that does not know how long this job will run and what other jobs might arrive in the future.

There are two interesting and well-studied computational paradigms aimed at tackling uncertainty.  The first is in the field of machine learning where uncertainty is addressed by making predictions about the future.  This is typically achieved by examining the past and building robust models based on the data.  These models are then used to make predictions about the future.  Humans and real-world applications can use these predictions to adapt their behavior: knowing that it is likely to snow next week can be used to plan a ski trip.  The second is in the field of algorithm design.  Here, the effort has to been to develop a notion of competitive ratio%
\footnote{Informally, competitive ratio compares the worst-case performance of an online algorithm to the best offline algorithm that knows the future.}
for the goodness of an algorithm in the presence of an unknown future and develop online algorithms that make decisions heedless of the future but are provably good in the \emph{worst-case}, i.e., even in the most pessimistic future scenario.  Such online algorithms are popular and successful in real-world systems and have been used to model problems including paging, caching, job scheduling, and more (see the book by Borodin and El-Yaniv~\cite{Borodin}).

Recently, there has been some interest in using machine-learned predictions to improve the quality of online algorithms~\cite{sergei1, sergei2}.  The main motivation for this line of research is two-fold.  The first is to design new online algorithms that can avoid assuming a worst-case scenario and hence have better performance guarantees both in theory and practice.  The second is to leverage the vast amount of modeling work in machine learning, which precisely deals with how to make predictions.  Furthermore, as machine-learning models are often retrained on new data, these algorithms can naturally adapt to evolving data characteristics.  When using the predictions, it is important that the online algorithm is unaware of the performance of the predictor and makes no assumptions on the types of prediction errors.   Additionally, we desire two key properties of the algorithm: (i) if the predictor is good, then the online algorithm should perform close to the best offline algorithm (\emph{consistency}) and (ii) if the predictor is bad, then the online algorithm should gracefully degrade, i.e., its performance should be close to that of the online algorithm without predictions (\emph{robustness}).

\para{Our problems}
We consider two basic problems in online algorithms and show how to use machine-learned predictions to improve their performance in a provable manner.  The first is \emph{ski rental}, in which a skier is going to ski for an unknown number of days and on each day can either rent skis at unit price or buy them for a higher price $b$ and ski for free from then on.  The uncertainty is in the number of skiing days, which a predictor can estimate.  Such a prediction can be made reasonably well, for example, by building models based on weather forecasts and past behavior of other skiers.  The ski rental problem is the canonical example of a large class of online rent-or-buy problems, which arise whenever one needs to decide between a cheap short-term solution (``renting'') and an expensive long-term one (``buying''). Several extensions and generalizations of the ski rental problem have been studied leading to numerous applications such as dynamic TCP acknowledgement~\cite{karlin2001dynamic}, buying parking permits~\cite{meyerson2005parking}, renting cloud servers~\cite{khanafer2013constrained}, snoopy caching~\cite{karlin1988}, and others.  The best known deterministic algorithm for ski rental is the \emph{break-even} algorithm: rent for the first $b-1$ days and buy on day $b$.  It is easy to observe that the break-even algorithm has a competitive ratio of 2 and no deterministic algorithm can do better. On the other hand, Karlin et al.~\cite{karlin1994competitive} designed a randomized algorithm that yields a competitive ratio of $\frac{e}{e-1} \approx 1.58$, which is also optimal.

The second problem we consider is \emph{non-clairvoyant job scheduling}.  In this problem a set of jobs, all of which are available immediately, have to be scheduled on one machine; any job can be preempted and resumed later.  The objective is to minimize the sum of completion times of the jobs.  The uncertainty in this problem is that the scheduler does not know the running time of a job until it actually finishes.  Note that a predictor in this case can predict the running time of a job, once again, by building a model based on the characteristics of the job, resource requirements, and its past behavior.  Non-clairvoyant job scheduling, introduced by Motwani et al.~\cite{motwani1994nonclairvoyant}, is a basic problem in online algorithms with a rich history and, in addition to its obvious applications to real-world systems, many variants and extensions of it have been studied extensively in the literature~\cite{im2017competitive,becchetti2001non,bansal2004non,im2014selfishmigrate}.  Motwani et al.~\cite{motwani1994nonclairvoyant} showed that the round-robin algorithm has a competitive ratio of 2, which is optimal.

\para{Main results}
Before we present our main results we need a few formal notions.  In online algorithms, the \emph{competitive ratio} of an algorithm is defined as the worst-case ratio of the algorithm cost to the offline optimum. In our setting, this is a function $c(\eta)$ of the error $\eta$ of the predictor\footnote{The definition of the prediction error $\eta$ is problem-specific. In both the problems considered in this paper, $\eta$ is defined to be the $L_1$ norm of the error.}. We say that an algorithm is \emph{$\gamma$-robust} if $c(\eta) \leq \gamma$ for all $\eta$, and that it is \emph{$\beta$-consistent} if $c(0) = \beta$. So consistency is a measure of how well the algorithm does in the best case of perfect predictions, and robustness is a measure of how well it does in the worst-case of terrible predictions. 

Let $\lambda \in (0, 1)$ be a hyperparameter.  
For the ski rental problem with a predictor, we first obtain a deterministic online algorithm that is $(1 + 1/\lambda)$-robust and $(1 + \lambda)$-consistent (Section~\ref{sec:ski-deterministic}).  We next improve these bounds by obtaining a randomized algorithm that is  $(\frac{1 + 1/b}{1 - e^{-(\lambda - \nicefrac{1}{b})}})$-robust and $(\frac{\lambda}{1 - e^{-\lambda}})$-consistent, where $b$ is the cost of buying (Section~\ref{sec:ski-randomized}).  For the non-clairvoyant scheduling problem, we obtain a randomized algorithm that is  $(2/(1-\lambda))$-robust and $(1/\lambda)$-consistent.  Note that the consistency bounds for all these algorithms circumvent the lower bounds, which is possible only because of the predictions.  

It turns out that for these problems, one has to be careful how the predictions are used.  We illustrate through an example that if the predictions are used naively, one cannot ensure robustness (Section~\ref{sec:ski-consistent}).  Our algorithms proceed by opening up the classical online algorithms for these problems and using the predictions in a judicious manner.  We also conduct experiments to show that the algorithms we develop are practical and achieve good performance compared to ones that do not use any prediction.

\para{Related work}
The work closest to ours is that of Medina and Vassilvitskii~\cite{sergei1} and Lykouris and Vassilvitskii~\cite{sergei2}.  The former used a prediction oracle to improve reserve price optimization, relating the gap beween the expected bid and revenue to the average predictor loss.  In a sense, this paper initiated the study of online algorithms equipped with machine learned predictions.  The latter developed this framework further, introduced the concepts of robustness and consistency, and considered the online caching problem with predictions.  It modified the well-known Marker algorithm to use the predictions ensuring both robustness and consistency.  While we operate in the same framework, none of their techniques are applicable to our setting.  Another recent work is that of Kraska et al.~\cite{Kraska} that empirically shows that better indexes can be built using machine learned models; it does not provide any provable guarantees for its methods.

There are other computational models that try to tackle uncertainty. The field of robust optimization~\cite{kouvelis2013robust} considers uncertain inputs and aims to design algorithms that yield good performance guarantees for any potential realization of the inputs. There has been some work on analyzing algorithms when the inputs are stochastic or come from a known distribution~\cite{MNS12, MGZ12, BS12}.  In the optimization community, the whole field of online stochastic optimization concerns online decision making under uncertainty by assuming a distribution on future inputs; see the book by Russell Bent and Pascal Van Hentenryck~\cite{BHbook}.  Our work differs from these in that we do not assume anything about the input; in fact, we do not assume anything about the predictor either!


\newcommand{\opt}{\mathsf{OPT}}
\newcommand{\OPT}{\opt}
\newcommand{\ALG}{\mathsf{ALG}}

\section{Ski rental with prediction}

In the ski rental problem, let rentals cost one unit per day, $b$ be the cost to buy, $x$ be the actual number of skiing days, which is unknown to the algorithm, and $y$ be the predicted number of days. Then $\eta = |y-x|$ is the prediction error. Note that we do not make any assumptions about its distribution. The optimum cost is $\opt = \min\{b,x\}$.

\subsection{Warmup: A simple consistent, non-robust algorithm \label{sec:ski-consistent}}

We first show that an algorithm that naively uses the predicted number of days to decide whether or not to buy is 1-consistent, i.e., its competitive ratio is 1 when $\eta=0$. However, this algorithm is not robust, as the competitive ratio can be arbitrarily large in case of incorrect predictions.

\begin{figure}[h]
\centering
\begin{minipage}{0.6\textwidth}
\IncMargin{1em}
\begin{algorithm}[H]
\eIf{$y \geq b$}{
 Buy on the first day. 
}
{
 Keep renting for all skiing days.
}
\caption{A simple 1-consistent algorithm}
\label{alg:ski-simple}
\end{algorithm}
\end{minipage}
\end{figure}

\begin{lemma}
Let $\ALG$ denote the cost of the solution obtained by 
Algorithm~\ref{alg:ski-simple} and let $\OPT$ denote the optimal solution cost on the same instance. Then $\ALG \leq \OPT + \eta$.
\end{lemma}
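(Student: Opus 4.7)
The plan is a short case analysis on the two branches of the algorithm, cross-cut by whether the true number of skiing days $x$ is above or below the buying threshold $b$. In each branch I only need to convert the gap $\ALG-\OPT$ into $|y-x|$ using the inequality that defines the branch.

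First I would record the three quantities explicitly: $\OPT=\min\{b,x\}$, $\ALG=b$ on the branch $y\geq b$, and $\ALG=x$ on the branch $y<b$. Then I would split into four subcases according to the sign of $y-b$ and the sign of $x-b$. In two of the four (namely $y\geq b,\,x\geq b$ and $y<b,\,x\leq b$) the algorithm matches $\OPT$ exactly, so the inequality is trivial.

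The two interesting subcases are where the predictor and the truth disagree about which side of $b$ they lie on. If $y\geq b$ but $x<b$, then $\ALG-\OPT=b-x$, and because $y\geq b>x$ we get $\eta=y-x\geq b-x$, which is exactly what we need. Symmetrically, if $y<b$ but $x>b$, then $\ALG-\OPT=x-b$, and since $y<b<x$ we get $\eta=x-y\geq x-b$. Combining the four subcases gives $\ALG\leq\OPT+\eta$.

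I do not anticipate any real obstacle; the statement is essentially a book-keeping lemma whose whole content is the observation that whenever the algorithm and $\OPT$ pick different actions, $y$ must lie on the opposite side of $b$ from $x$, so the error $|y-x|$ already dominates the overshoot $|b-x|$. The one thing I would be careful about is writing the case split cleanly (using $y\geq b$ vs.\ $y<b$ to match the \textbf{if}/\textbf{else} of Algorithm~\ref{alg:ski-simple}) so the bound $\eta\geq|\ALG-\OPT|$ is visibly the tight one in each nontrivial case.
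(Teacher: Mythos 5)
Your proposal is correct and follows essentially the same route as the paper: a four-way case split on whether $y$ and $x$ each lie above or below $b$, with the two mismatched cases handled by observing that $\eta=|y-x|$ dominates $|b-x|$ because $y$ and $x$ straddle $b$. The paper's proof is line-for-line the same computation.
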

\begin{proof}
We consider different cases based on the relative values of the prediction $y$ and the actual number of days $x$ of the instance.  Recall that Algorithm~\ref{alg:ski-simple} incurs a cost of $b$ whenever the prediction is at least $b$ and incurs a cost of $x$ otherwise.
\begin{itemize}[nolistsep]
\item $y \geq b, x \geq b$ $\implies$ $\ALG = b = \OPT$.
\item $y < b, x < b$ $\implies$ $\ALG = x = \OPT$
\item $y \geq b, x < b$ $\implies$ $\ALG = b \leq x + y - x = x + \eta = \OPT + \eta$
\item $y < b, x \geq b$ $\implies$ $\ALG = x < b + x - y = b + \eta = \OPT + \eta$
\qedhere
\end{itemize}
\end{proof}
A major drawback of Algorithm~\ref{alg:ski-simple} is its lack of robustness. In particular, its competitive ratio can be unbounded if the prediction $y$ is small but $x \gg b$.  Our goal next is to obtain an algorithm that is both consistent and robust.

\subsection{A deterministic robust and consistent algorithm}
\label{sec:ski-deterministic}

In this section, we show that a small modification to Algorithm~\ref{alg:ski-simple} yields an algorithm that is both consistent and robust.
Let $\lambda \in (0, 1)$ be a hyperparameter. As we see later, varying $\lambda$ gives us a smooth trade-off between the robustness and consistency of the algorithm.

\begin{figure}[h]
\centering
\begin{minipage}{0.68\textwidth}
\IncMargin{1em}
\begin{algorithm}[H]
\eIf{$y \geq b$}{
 Buy on the start of day $\lceil \lambda b\rceil$
}
{
 Buy on the start of day $\lceil b/\lambda\rceil$
}
\caption{A deterministic robust and consistent algorithm.}
\label{alg:ski-deterministic}
\end{algorithm}
\end{minipage}
\end{figure}

\begin{theorem}
\label{thm:ski-deterministic}
With a parameter $\lambda \in (0, 1)$, Algorithm~\ref{alg:ski-deterministic} has a competitive ratio of at most $\min \left\{ \dfrac{1 + \lambda}{\lambda}, ~(1 + \lambda) + \dfrac{\eta}{(1-\lambda) \OPT}\right\}$. In particular, Algorithm~\ref{alg:ski-deterministic} is $(1 + 1/\lambda)$-robust and $(1 + \lambda)$-consistent.
\end{theorem}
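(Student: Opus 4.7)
The plan is to prove both upper bounds simultaneously via a case analysis on (i) which branch Algorithm~\ref{alg:ski-deterministic} takes (i.e., whether $y\geq b$) and (ii) where the true number of skiing days $x$ falls relative to the buy day $\lceil\lambda b\rceil$ or $\lceil b/\lambda\rceil$ and relative to $b$. Using the elementary bound $\lceil z\rceil - 1 \leq z$, one gets clean closed forms for $\ALG$: in the first branch, $\ALG \leq (1+\lambda)b$ whenever the algorithm actually buys; in the second branch, $\ALG \leq (1+1/\lambda)b$ whenever it buys, and $\ALG = x$ otherwise. In all subcases $\OPT = \min(b, x)$ is trivial.

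For the robustness bound $(1+\lambda)/\lambda = 1+1/\lambda$, I would verify each subcase directly. The only slightly nontrivial case is $y\geq b$ with $\lceil\lambda b\rceil\leq x<b$: there $\OPT = x\geq \lambda b$ and $\ALG\leq(1+\lambda)b$, giving ratio at most $(1+\lambda)/\lambda$. All other subcases give ratio at most $1$, $1+\lambda$, or $1+1/\lambda$ by direct substitution.

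For the $\eta$-dependent bound, only the ``straddling'' subcases where $y$ and $x$ lie on opposite sides of $b$ require real work, since in non-straddling subcases either $\ALG=\OPT$ or the ratio is already $\leq 1+\lambda$. The key observation is that in any straddling subcase $\eta = |y-x| \geq |b-x|$. In the case $y\geq b > x \geq \lceil\lambda b\rceil$, substituting $\ALG\leq(1+\lambda)b$ and $\OPT=x$ reduces the target inequality to $(1-\lambda^2)(b-x)\leq \eta$, which follows immediately from $\eta\geq b-x$. In the cases $x\geq b > y$, substituting the corresponding $\ALG$ (either $x$ or $\leq(1+1/\lambda)b$) and $\OPT=b$ and using $\eta\geq x-b$ reduces the target to the inequalities $x\geq \lambda b$ and $1\geq 1-\lambda^2$, respectively, both of which hold. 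Consistency $\beta = 1+\lambda$ then drops out by specializing to $\eta=0$, since $\eta=0$ forces $y=x$, landing us in a non-straddling subcase where the ratio is at most $1+\lambda$.

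The main obstacle I anticipate is purely bookkeeping: keeping the multiple subcases organized, making sure the ceilings produce the clean $(1+\lambda)b$ and $(1+1/\lambda)b$ upper bounds on $\ALG$, and correctly identifying which subcases need the $\eta$ term versus which are handled by the $(1+\lambda)/\lambda$ bound alone. Once the subcase table is written down, the two short algebraic reductions above carry essentially all the substance of the argument.
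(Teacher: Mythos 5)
Your proposal is correct and follows essentially the same route as the paper's proof: a case analysis on the branch taken and on where $x$ falls relative to the buy day, using $\lceil z\rceil - 1 \leq z$ to get $\ALG \leq (1+\lambda)b$ or $(1+1/\lambda)b$, and using $\eta \geq |b-x|$ in the straddling cases to absorb the gap into the $\frac{\eta}{1-\lambda}$ term. Your algebraic reductions (e.g., to $(1-\lambda^2)(b-x)\leq\eta$ and to $1 \geq 1-\lambda^2$) are just explicit versions of the paper's inequalities, so this is the same argument with slightly different bookkeeping.
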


\begin{proof}
We begin with the first bound. Suppose $y \geq b$ and the algorithm buys the skis at the start of day $\lceil \lambda b \rceil$. Since the algorithm  incurs a cost of $b + \lceil \lambda b \rceil - 1$ whenever $x \geq \lceil \lambda b \rceil$, the worst competitive ratio is obtained when $x = \lceil \lambda b \rceil$, for which $\OPT = \lceil \lambda b \rceil$. In this case, we have $\ALG = b + \lceil \lambda b \rceil - 1 \leq b + \lambda b \leq \left(\frac{1 + \lambda}{\lambda}\right) \lceil \lambda b \rceil = \left(\frac{1 + \lambda}{\lambda}\right) \OPT$. On the other hand, when $y < b$, the algorithm buys skis at the start of day $\lceil \nicefrac{b}{\lambda}\rceil$ and rents until then. In this case, the worst competitive ratio is attained whenever $x = \lceil \nicefrac{b}{\lambda}\rceil$ as we have $\OPT=b$ and $\ALG = b + \lceil \nicefrac{b}{\lambda}\rceil -1 \leq b + b/\lambda = \left(\frac{1 + \lambda}{\lambda}\right) \OPT$.

To prove the second bound, we need to consider the following two cases.  Suppose $y \geq b$.  Then, for all $x < \lceil \lambda b \rceil$, we have $\ALG = \OPT = x$. On the other hand, for $x \geq \lceil \lambda b \rceil$, we have $\ALG = b + \lceil \lambda b \rceil - 1 \leq (1+\lambda) b \leq (1+\lambda)(\OPT + \eta)$. The second inequality follows since either $OPT = b$ (if $x \geq b$) or $b \leq y \leq \OPT + \eta$ (if $x < b$).
Suppose $y < b$.  Then, for all $x \leq b$, we have $\ALG = \OPT = x$. Similarly, for all $x \in (b, \lceil \nicefrac{b}{\lambda}\rceil)$, we have $\ALG = x \leq y + \eta < b + \eta = \OPT + \eta$. Finally for all $x \geq \lceil \nicefrac{b}{\lambda}\rceil$, noting that $\eta = x - y > \nicefrac{b}{\lambda} - b = (1-\lambda) \nicefrac{b}{\lambda}$, we have $\ALG = b + \lceil \nicefrac{b}{\lambda}\rceil - 1 \leq b + \nicefrac{b}{\lambda} < b + (\frac{1}{1 - \lambda})\eta = \OPT + (\frac{1}{1 - \lambda})\eta$.
Thus we obtain $\ALG \leq (1 + \lambda) \OPT + (\frac{1}{1-\lambda}) \eta$, completing the proof.
\end{proof}
Thus, Algorithm~\ref{alg:ski-deterministic} gives an option to trade-off consistency and robustness. In particular, greater trust in the predictor suggests setting $\lambda$ close to zero as this leads to a better competitive ratio when $\eta$ is small. On the other hand, setting $\lambda$ close to one is  conservative and yields a more robust algorithm.

\subsection{A randomized robust and consistent algorithm}
\label{sec:ski-randomized}

In this section we consider a family of randomized algorithms and compare their performance against an oblivious adversary. In particular, we design robust and consistent algorithms that yield a better trade-off than the above deterministic algorithms.
Let $\lambda \in (\nicefrac{1}{b},1)$ be a hyperparameter. 
For a given $\lambda$, Algorithm~\ref{alg:ski-randomized}  samples the day when skis are bought based on two different probability distributions, depending on the prediction received, and rents until that day.

\begin{figure}[htbp]
\centering
\begin{minipage}{0.8\textwidth}
\IncMargin{1em}
\begin{algorithm}[H]
\eIf{$y \geq b$}{
Let $k \leftarrow \lfloor \lambda b \rfloor$\; 
Define $q_i \leftarrow \left(\frac{b-1}{b}\right)^{k-i} \cdot \frac{1}{b (1 - (1 - \nicefrac{1}{b})^k)}$ for all $1 \leq i \leq k$\;
Choose $j \in \{1 \ldots k\}$ randomly from the distribution defined by $q_i$\;
Buy at the start of day $j$.
}
{
Let $\ell \leftarrow \lceil \nicefrac{b}{\lambda} \rceil$\;
Define $r_i \leftarrow \left(\frac{b-1}{b}\right)^{\ell-i} \cdot \frac{1}{b (1 - (1 - \nicefrac{1}{b})^\ell)}$ for all $1 \leq i \leq \ell$\;
Choose $j \in \{1 \ldots \ell \}$ randomly from the distribution defined by $r_i$\;
Buy at the start of day $j$.
}
\caption{A randomized robust and consistent algorithm}
\label{alg:ski-randomized}
\end{algorithm}
\end{minipage}
\end{figure}

\begin{theorem}
\label{thm:ski-randomized}
Algorithm~\ref{alg:ski-randomized} yields a competitive ratio of at most $\min\{\frac{1 + 1/b}{1 - e^{-(\lambda - \nicefrac{1}{b})}}, \frac{\lambda}{1 - e^{-\lambda}} (1 + \frac{\eta}{\OPT})\}$. In particular, Algorithm~\ref{alg:ski-randomized} is $(\frac{1+1/b}{1 - e^{-(\lambda - \nicefrac{1}{b})}})$-robust and $(\frac{\lambda}{1 - e^{-\lambda}})$-consistent.\footnote{The conference version~\cite{KPS18} of this paper incorrectly claimed a slightly stronger robustness of $\frac{1}{1 - e^{-(\lambda - 1/b)}}$.}
\end{theorem}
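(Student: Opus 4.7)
The plan is to first compute the algorithm's expected cost in closed form as a function of the true skiing horizon $x$ and the support size $n \in \{k,\ell\}$ of the sampling distribution. In both branches the random buy-day $j$ is drawn with probability proportional to $c^{n-j}$ on $\{1,\ldots,n\}$, where $c = 1 - 1/b$. The key technical step is to establish the clean identity
\[
E[\ALG] = \frac{\min(x,n)}{1 - c^n}.
\]
For $x \geq n$ every sampled day is reached and the cost is $b+j-1$; the sum $\sum_{j=1}^n c^{n-j}(b+j-1)$ collapses to $nb$ via the standard ``multiply by $c$ and subtract'' telescoping trick, using $1-c = 1/b$. For $x \leq n$ the per-outcome cost is $b+j-1$ when $j \leq x$ and $x$ otherwise, so after regrouping $E[\ALG] = x + \sum_{j=1}^x q_j(b+j-1-x)$; the same telescoping argument then gives $\sum_{j=1}^x c^{n-j}(b+j-1-x) = xbc^n$, which simplifies $E[\ALG]$ to $x/(1-c^n)$.

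Given this identity, the robustness bound $(1+1/b)/(1-e^{-(\lambda-1/b)})$ follows by a worst-$x$ analysis in each branch. In the $y \geq b$ branch, $E[\ALG]/\OPT$ attains its maximum of $1/(1-c^k)$ at $x \leq k$ (where $\OPT = x$ and $E[\ALG] = x/(1-c^k)$); using $k \geq \lambda b - 1$ together with $(1-1/b)^k \leq e^{-k/b}$, this is at most $1/(1-e^{-(\lambda-1/b)})$. In the $y < b$ branch, the maximum ratio is $\ell/(b(1-c^\ell))$, attained at $x \geq \ell$; bounding $\ell \leq b/\lambda + 1$ and $c^\ell \leq e^{-1/\lambda}$ gives at most $(1/\lambda + 1/b)/(1-e^{-1/\lambda})$, and comparing to the claimed bound reduces to the algebraic inequality $(1/\lambda + 1/b)(1-e^{-(\lambda-1/b)}) \leq (1+1/b)(1-e^{-1/\lambda})$ on $\lambda \in (1/b, 1)$.

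For the consistency-style bound $E[\ALG] \leq \alpha(\OPT + \eta)$ with $\alpha = \lambda/(1-e^{-\lambda})$, the two key scalar inequalities are $k/(b(1-c^k)) \leq \alpha$ (for the $y \geq b$ branch) and $1/(1-c^\ell) \leq \alpha$ (for the $y < b$ branch). The first follows from the monotonicity of $u \mapsto u/(1-e^{-u})$ applied to $u = k/b \leq \lambda$, together with $c^k \leq e^{-k/b}$. The second reduces to the calculus inequality $(1-e^{-\lambda})/\lambda + e^{-1/\lambda} \leq 1$ on $(0,1)$, which holds with equality at both endpoints and can be confirmed in the interior by a short monotonicity check. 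Given these, the inequality $E[\ALG] \leq \alpha(\OPT + \eta)$ is immediate in the $y \geq b$ branch, since $\OPT + \eta \geq b$ always (either $\OPT = b$, or $\OPT = x < b$ and $\eta \geq b - x$) and $E[\ALG] \leq k/(1-c^k) \leq \alpha b$; in the $y < b$ branch, Step~1 yields $E[\ALG] \leq \alpha \min(x,\ell)$, and a short subcase split on the position of $x$ relative to $b$ and $\ell$ verifies $\min(x,\ell) \leq \OPT + \eta$.

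I anticipate that the main obstacle will be Step~1's closed-form computation: although each telescoping identity is individually elementary, the $x \leq n$ case in particular requires careful bookkeeping, and the surprising cleanness of the final $\min(x,n)/(1-c^n)$ formula is not obvious beforehand. The subsequent steps are routine once the relevant scalar inequalities are in hand, but require careful tracking of the floor/ceiling rounding in $k$ and $\ell$ to extract the exact constants stated in the theorem.
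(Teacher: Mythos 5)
Your route is essentially the paper's: the closed form $\mathbb{E}[\ALG] = \min(x,n)/(1-(1-1/b)^n)$ for $n \in \{k,\ell\}$ is exactly what the paper computes case by case, the worst-$x$ analysis in each branch matches cases (i)--(iv), and the reductions you use for consistency (the monotonicity of $u \mapsto u/(1-e^{-u})$, the facts $\OPT + \eta \geq b$ when $y \geq b$ and $\min(x,\ell) \leq \OPT + \eta$ when $y < b$) are the same ones the paper uses, modulo cosmetic regrouping.

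There is one substantive gap. You correctly reduce the robustness bound in the $y < b$ branch to the inequality $\frac{1/\lambda + 1/b}{1 - e^{-1/\lambda}} \leq \frac{1 + 1/b}{1 - e^{-(\lambda - 1/b)}}$, but you leave it as an unproved algebraic claim. This is not a routine verification: it is precisely Lemma~\ref{lem:erratum} of the paper, whose proof takes a full appendix section (rewriting the claim as a quadratic in $1/b$, showing the unconstrained minimizer is negative so the minimum over $[0,\lambda]$ occurs at $0$, and then invoking the three estimates of Lemma~\ref{lem:helper}), and it is exactly the step at which the conference version of this paper went wrong by claiming the stronger constant $\frac{1}{1-e^{-(\lambda-1/b)}}$. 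A complete proof must supply this argument. A second, more minor, issue: you assert that $(1-e^{-\lambda})/\lambda + e^{-1/\lambda} \leq 1$ on $(0,1)$ follows from ``a short monotonicity check,'' but the left-hand side equals $1$ at both endpoints, so it cannot be monotone on the interval; the inequality is nonetheless true (it follows by combining parts (ii) and (iii) of Lemma~\ref{lem:helper}), but a different justification is needed.
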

\begin{proof}
We consider different cases depending on the relative values of $y$ and $x$.

(i) $y \geq b, x \geq k$. Here, we have $\OPT = \min\{b, x\}$.  Since the algorithm incurs a cost of $(b + i - 1)$ when we buy at the beginning of day $i$, we have
\begin{align*}
\mathbb{E}[\ALG] &= \sum_{i = 1}^k (b+i-1) q_i
 =  \sum_{i = 1}^k (b+i-1)\left(\frac{b-1}{b}\right)^{k-i}\frac{1}{b(1 - (1 - \nicefrac{1}{b})^k)} 
\enspace = \enspace \frac{k}{1 - (1 - \nicefrac{1}{b})^k} \\
&\leq \frac{k}{1 - e^{\nicefrac{-k}{b}}}
\enspace \leq \enspace \left(\frac{\nicefrac{k}{b}}{1 - e^{\nicefrac{-k}{b}}}\right) (\OPT + \eta)
\enspace \leq \enspace \left(\frac{\lambda}{1 - e^{-\lambda}}\right) (\OPT + \eta).
\end{align*}

(ii) $y \geq b, x < k$. Here, we have $\OPT = x$. On the other hand, the algorithm incurs a cost of $(b+i-1)$ only if it buys at the beginning of day $i \leq x$. In particular, we have
\begin{align*}
\mathbb{E}[\ALG] &= \sum_{i = 1}^x (b+i-1) q_i + \sum_{i=x+1}^k x q_i \\
& = \frac{1}{b(1 - (1 - \nicefrac{1}{b})^k)} \left[ \sum_{i = 1}^x (b+i-1)\left(\frac{b-1}{b}\right)^{k-i} + \sum_{i=x+1}^k x\left(\frac{b-1}{b}\right)^{k-i}\right] \\
 &= \frac{x}{1 - (1 - \nicefrac{1}{b})^k} 
\enspace \leq \enspace \left(\frac{1}{1 - e^{\nicefrac{-k}{b}}}\right) \OPT
\enspace \leq \enspace \left(\frac{1}{1 - e^{-(\lambda - \nicefrac{1}{b})}}\right) \OPT,
\end{align*}
which establishes robustness.  In order to prove consistency, we can rewrite the RHS as follows
\begin{align*}
\mathbb{E}[\ALG] 
& \leq \left(\frac{1}{1 - e^{\nicefrac{-k}{b}}}\right) \OPT
\enspace = \enspace\left(\frac{\nicefrac{k}{b}}{1 - e^{\nicefrac{-k}{b}}}\right) \OPT + \left(\frac{\nicefrac{(b-k)}{b}}{1 - e^{\nicefrac{-k} {b}}}\right) x \\
& \leq \left(\frac{\nicefrac{k}{b}}{1 - e^{\nicefrac{-k}{b}}}\right) \OPT + \left(\frac{\nicefrac{k}{b}}{1 - e^{\nicefrac{-k}{b}}}\right) \eta
\enspace \leq \enspace \left(\frac{\lambda}{1 - e^{-\lambda}}\right)(\OPT + \eta),
\end{align*}
since $x < k$ and $b-k \leq \eta$.

(iii) $y < b, x < \ell$.  Here, we have $\OPT = \min\{b, x\}$.  On the other hand, the expected cost of the algorithm can be computed similar to (ii)
\begin{align*}
\mathbb{E}[\ALG] &= \sum_{i = 1}^x (b+i-1) r_i + \sum_{i=x+1}^\ell x r_i
\enspace 
\leq \enspace \left(\frac{1}{1 - e^{\nicefrac{-\ell}{b}}}\right)x\\
&\leq \left(\frac{1}{1 - e^{\nicefrac{-1}{\lambda}}}\right)(\OPT + \eta) 
\enspace \leq \enspace \left(\frac{\lambda}{1 - e^{-\lambda}}\right) (\OPT + \eta).
\end{align*}

(iv) $y < b, x \geq \ell$.  Here, we have $\OPT = b$. The expected cost incurred by the algorithm is as in (i).
\begin{align*}
\mathbb{E}[\ALG] &= \sum_{i = 1}^\ell (b+i-1) r_i
\enspace = \enspace \frac{\ell}{1 - (1 - \nicefrac{1}{b})^\ell} 
\enspace \leq \enspace \frac{\lceil\nicefrac{b}{\lambda}\rceil}{(1 - e^{\nicefrac{-\ell}{b}})} \\
&\leq \left(\frac{\nicefrac{1}{\lambda} + \nicefrac{1}{b}}{(1 - e^{\nicefrac{-1}{\lambda}})}\right) \OPT 
\enspace \leq \enspace \left(\frac{1 + 1/b}{1 - e^{-(\lambda - \nicefrac{1}{b})}}\right) \OPT,
\intertext{where the last inequality is proven in Lemma \ref{lem:erratum} in the Appendix. This completes the proof of robustness. To prove consistency, we rewrite the RHS as follows.}
\mathbb{E}[\ALG] &\leq \frac{\ell}{1 - e^{\nicefrac{-\ell}{b}}} \enspace \leq \enspace \frac{\ell}{1 - e^{\nicefrac{-1}{\lambda}}}
\enspace = \enspace \frac{1}{1 - e^{\nicefrac{-1}{\lambda}}} \left(b + \ell - b\right)\\
&\leq \frac{1}{1 - e^{\nicefrac{-1}{\lambda}}} (\OPT + \eta) \enspace \leq \enspace \left(\frac{\lambda}{1 - e^{-\lambda}}\right) (\OPT + \eta).
\qedhere
\end{align*}
\end{proof}

Algorithms \ref{alg:ski-deterministic} and \ref{alg:ski-randomized} both yield a smooth trade-off between the robustness and consistency guarantees for the ski rental problem. As shown in Figure~\ref{fig:tradeoff}, the randomized algorithm offers a much better trade-off by always guaranteeing smaller consistency for a given robustness guarantee. We remark that setting $\lambda = 1$ in Algorithms \ref{alg:ski-deterministic} and \ref{alg:ski-randomized} allows us to recover the best deterministic and randomized algorithms for the classical ski rental problem without using predictions.
\begin{wrapfigure}{R}{0.35\textwidth}
\centering
\includegraphics[width=0.35\textwidth]{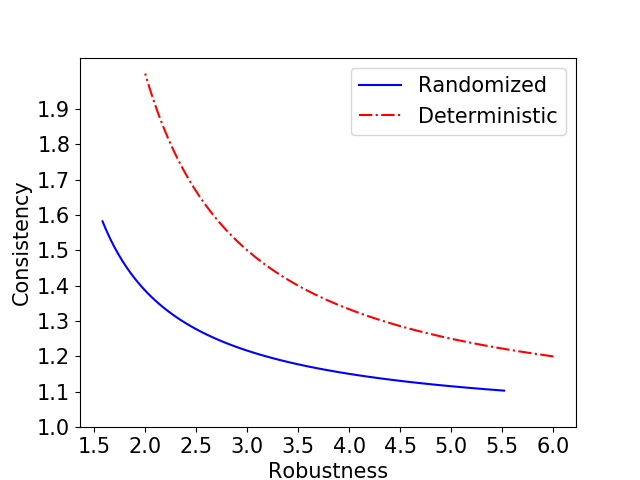}
\caption{\footnotesize Ski rental: Robustness vs. consistency.}
\label{fig:tradeoff}
\end{wrapfigure}

\subsection{Extensions}
Consider a generalization of the ski rental problem where we have a varying demand $x_i$ for computing resources on each day $i$. 
Such a situation models the problem faced while designing small enterprise data centers. System designers have the choice of buying machines at a high setup cost or renting machines from a cloud service provider to handle the computing needs of the enterprise. 
One can satisfy the demand in two ways: either pay $1$ to rent one machine and satisfy one unit of demand for one day, or pay $b$ to buy a machine and use it to satisfy one unit of demand for all future days. 
It is easy to cast the classical ski rental problem in this framework by setting $x_i = 1$ for the first $x$ days and to 0 later. Kodialam~\cite{kodialamcompetitive} considers this generalization and gives a deterministic algorithm with a competitive ratio of 2 as well as a randomized algorithm with competitive ratio of $\frac{e}{e-1}$.

Now suppose we have predictions $y_i$ for the demand on day $i$. We define $\eta = \sum_i |x_i - y_i|$ to be the total $L_1$ error of the predictions. Both Algorithms~\ref{alg:ski-deterministic} and \ref{alg:ski-randomized} extend naturally to this setting to yield the same robustness and consistency guarantees as in Theorems \ref{thm:ski-deterministic} and \ref{thm:ski-randomized}. Our results follow from viewing an instance of ski rental with varying demand problem as $k$ disjoint instances of the classical ski rental problem, where $k$ is an upper bound on the maximum demand on any day.  The proofs are similar to those in Sections~\ref{sec:ski-deterministic} and \ref{sec:ski-randomized}; we omit them for brevity. 

\section{Non-clairvoyant job scheduling with prediction}
\label{sec:scheduling}

We consider the simplest variant of non-clairvoyant job scheduling, i.e., scheduling $n$ jobs on a single machine with no release dates. The processing requirement $x_j$ of a job $j$ is unknown to the algorithm and only becomes known once the job has finished processing. Any job can be preempted at any time and resumed at a later time without any cost. The objective function is to minimize the sum of completion times of the jobs. Note that no algorithm can yield any non-trivial guarantees if preemptions are not allowed.

Let $x_1, \ldots, x_n$ denote the actual processing times of the $n$ jobs, which are unknown to the non-clairvoyant algorithm. In the clairvoyant case, when processing times are known up front, the optimal algorithm is to simply schedule the jobs in non-decreasing order of job lengths, i.e., shortest job first. A deterministic non-clairvoyant algorithm called \emph{round-robin} (RR) yields a competitive ratio of 2~\cite{motwani1994nonclairvoyant}, which is known to be best possible.

Now, suppose that instead of being truly non-clairvoyant, the algorithm has an oracle that predicts the processing time of each job. 
Let $y_1, \ldots, y_n$ be the predicted processing times of the $n$ jobs. Then $\eta_j = |x_j - y_j|$ is the prediction error for job $j$, and $\eta = \sum_{j=1}^n \eta_j$ is the total error. We assume that there are no zero-length jobs and that units are normalized such that the actual processing time of the shortest job is at least one. Our goal in this section is to design algorithms that are both robust and consistent, i.e., can use good predictions to beat the lower bound of 2, while at the same time guaranteeing a worst-case constant competitive ratio.

\subsection{A preferential round-robin algorithm}

In scheduling problems with preemption, we can simplify exposition by talking about several jobs running concurrently on the machine, with rates that sum to at most 1. For example, in the round-robin algorithm, at any point of time, all $k$ unfinished jobs run on the machine at equal rates of $1/k$. This is just a shorthand terminology for saying that in any infinitesimal time interval, $1/k$ fraction of that interval is dedicated to running each of the jobs.

We call a non-clairvoyant scheduling algorithm \emph{monotonic} if it has the following property: given two instances with identical inputs and actual job processing times $(x_1, \ldots, x_n)$ and $(x'_1, \ldots, x'_n)$ such that $x_j \leq x'_j$ for all $j$, the objective function value found by the algorithm for the first instance is no higher than that for the second. It is easy to see that the round-robin algorithm is monotonic.

We consider the \emph{Shortest Predicted Job First} (SPJF) algorithm, which sorts the jobs in the increasing order of their \emph{predicted} processing times $y_j$ and executes them to completion in that order. Note that SPJF is monotonic, because if processing times $x_j$ became smaller (with predictions $y_j$ staying the same), all jobs would finish only sooner, thus decreasing the total completion time objective. SPJF produces the optimal schedule in the case that the predictions are perfect, but for bad predictions, its worst-case performance is not bounded by a constant. To get the best of both worlds, i.e. good performance for good predictions as well as a constant-factor approximation in the worst-case, we combine SPJF with RR using the following, calling the algorithm \emph{Preferential Round-Robin (PRR)}.

\begin{lemma}\label{combine}
Given two monotonic algorithms with competitive ratios $\alpha$ and $\beta$ for the minimum total completion time problem with preemptions, and a parameter $\lambda \in (0, 1)$, one can obtain an algorithm with competitive ratio $\min\{ \frac{\alpha}{\lambda}, \frac{\beta}{1-\lambda} \}$.
\end{lemma}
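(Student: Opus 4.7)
The plan is to run $A$ and $B$ in parallel as a single algorithm: at every instant, dedicate rate $\lambda$ of the machine to $A$'s schedule and rate $1-\lambda$ to $B$'s, with both operating on the current set of unfinished jobs. As soon as either sub-algorithm completes a job, that job is declared done and removed from both queues. The completion time $C_j$ of job $j$ in the combined schedule is then the first time either $A$ or $B$ removes it.

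The core of the analysis is the following equivalence, which I would apply separately to $A$ (and symmetrically to $B$). Let $x_j^A \le x_j$ denote the amount of work $A$ actually performs on $j$ before $j$ leaves the alive set. The claim is that $A$'s trajectory inside the combined algorithm is identical to running $A$ alone at rate $\lambda$ on the modified instance $(x_j^A)_j$. I would prove this by induction on removal events: both processes begin with all jobs alive, and since $A$'s decisions depend only on the current alive set, their trajectories coincide until the first removal event at some time $t_1$; at $t_1$, the job $j^\star$ removed in the combined algorithm is exactly the job that $A$ finishes in the modified solo run, either because $A$ itself has completed $x_{j^\star}^A = x_{j^\star}$ units of work, or because $B$ has completed $j^\star$ precisely when $A$ has done $x_{j^\star}^A$ units, which by definition is when $A$ finishes $j^\star$ in the modified instance. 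The induction then continues from the (now shared) alive set after $j^\star$'s removal.

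Given this equivalence, $\sum_j C_j$ equals the total completion time of $A$ run at rate $\lambda$ on $(x_j^A)$. Monotonicity of $A$ bounds this by the total completion time of $A$ at rate $\lambda$ on the original $(x_j)$, which by a simple time rescaling equals $(1/\lambda)\sum_j t_A(j)$, where $t_A(j)$ is $A$'s completion time at rate $1$. Since $A$ is $\alpha$-competitive, $\sum_j t_A(j) \le \alpha\cdot\OPT$, so $\sum_j C_j \le (\alpha/\lambda)\OPT$. The symmetric argument for $B$ gives $\sum_j C_j \le (\beta/(1-\lambda))\OPT$, and taking the minimum proves the claim.

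The hard part will be the equivalence step: formally justifying that $A$'s rule, which may be arbitrary and stateful, produces the same sequence of actions in both settings. This is also the precise place where monotonicity is used -- to absorb the inequality $x_j^A \le x_j$ rather than an equality -- and it must be threaded cleanly with the rate-$\lambda$ time rescaling in the final competitive bound.
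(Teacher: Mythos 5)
Your proposal is correct and follows essentially the same route as the paper's proof: run $A$ and $B$ in parallel at rates $\lambda$ and $1-\lambda$, observe that the other algorithm's concurrent work only shortens the jobs from each algorithm's point of view so monotonicity applies, and then rescale time to convert the rate-$\lambda$ run into an $\frac{\alpha}{\lambda}$ (resp.\ $\frac{\beta}{1-\lambda}$) guarantee. The only difference is that you make explicit, via the modified instance $(x_j^A)$ and the induction on removal events, the equivalence step that the paper states in one informal sentence; this is a welcome tightening rather than a different argument.
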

\begin{proof}
The combined algorithm runs the two given algorithms in parallel. The $\alpha$-approximation (call it $A$) is run at a rate of $\lambda$, and the $\beta$-approximation ($B$) at a rate of $1-\lambda$. Compared to running at rate 1, if  algorithm $A$ runs at a slower rate of $\lambda$, all completion times increase by a factor of $\nicefrac{1}{\lambda}$, so it becomes a $\frac{\alpha}{\lambda}$-approximation. Now, the fact that some of the jobs are concurrently being executed by algorithm $B$ only decreases their processing times from the point of view of $A$, so by monotonicity, this does not make the objective of $A$ any worse. Similarly, when algorithm $B$ runs at a lower rate of $1-\lambda$, it becomes a $\frac{\beta}{1-\lambda}$-approximation, and by monotonicity can only get better from concurrency with $A$. Thus, both bounds hold simultaneously, and the overall guarantee is their minimum.
\end{proof}

We next analyze the performance of SPJF.

\begin{lemma}\label{spjf}
The SPJF algorithm has competitive ratio at most $\left(1 + \frac{2\eta}{n}\right)$.
\end{lemma}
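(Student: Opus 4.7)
The plan is to compare SPJF to the clairvoyant optimum (\emph{shortest job first}, SJF) via a pairwise decomposition of the objective. On a single machine, if every job is run non-preemptively to completion in some order, the total completion time can be written as
\begin{equation*}
\sum_{j=1}^n C_j \;=\; \sum_{j=1}^n x_j \;+\; \sum_{\{i,j\}} \bigl( x_i\,\mathbf{1}[i \text{ before } j] + x_j\,\mathbf{1}[j \text{ before } i] \bigr),
\end{equation*}
where the second sum is over unordered pairs. Thus each unordered pair $\{i,j\}$ contributes exactly one of $x_i$ or $x_j$, namely that of whichever job runs first. In particular, $\OPT$ contributes $\min(x_i,x_j)$ on every pair, while SPJF contributes $x_i$ when $y_i < y_j$ and $x_j$ when $y_j < y_i$.

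From this it follows that $\ALG$ and $\OPT$ agree on every non-inverted pair, and
\begin{equation*}
\ALG - \OPT \;=\; \sum_{(i,j)\,:\, y_i \leq y_j,\; x_i > x_j} (x_i - x_j).
\end{equation*}
For any such ``inverted'' pair, $x_i - x_j > 0$ and $y_j - y_i \geq 0$, so
\begin{equation*}
\eta_i + \eta_j \;\geq\; (x_i - y_i) + (y_j - x_j) \;=\; (x_i - x_j) + (y_j - y_i) \;\geq\; x_i - x_j.
\end{equation*}
Upper-bounding the sum over inverted pairs by the sum over all unordered pairs and noting that each $\eta_k$ appears in exactly $n-1$ pairs yields $\ALG - \OPT \leq (n-1)\eta$.

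Finally, I would use the normalization $x_j \geq 1$ to lower-bound $\OPT$. Writing $\OPT = \sum_{i=1}^n (n-i+1) x_{(i)}$ for the sorted actual lengths $x_{(1)} \leq \cdots \leq x_{(n)}$, each term is at least $n-i+1$, so $\OPT \geq n(n+1)/2$. Combining, $\ALG/\OPT \leq 1 + 2(n-1)\eta/(n(n+1)) \leq 1 + 2\eta/n$.

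The main obstacle is really just spotting the pair-by-pair decomposition: once that is in hand, the inequality $\eta_i + \eta_j \geq x_i - x_j$ on inverted pairs is immediate, and the $\Omega(n^2)$ lower bound on $\OPT$ from the unit-length normalization is what converts an additive $(n-1)\eta$ bound into the multiplicative $(1 + 2\eta/n)$ bound claimed by the lemma.
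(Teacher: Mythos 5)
Your proposal is correct and follows essentially the same route as the paper's proof: the same pairwise decomposition of the total completion time (the paper phrases it via delay terms $d(i,j)$), the same bound $\eta_i + \eta_j \geq x_i - x_j$ on inverted pairs leading to $\ALG \leq \OPT + (n-1)\eta$, and the same lower bound $\OPT \geq n(n+1)/2$ from the unit-length normalization. No gaps.
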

\begin{proof}
Assume w.l.o.g. that jobs are numbered in non-decreasing order of their actual processing times, i.e. $x_1 \leq \ldots \leq x_n$. For any pair of jobs $(i,j)$, define $d(i,j)$ as the amount of job $i$ that has been executed before the completion time of job $j$. In other words, $d(i,j)$ is the amount of time by which $i$ delays $j$.  Let $\ALG$ denote the output of SPJF.  Then
\[
\ALG = \sum_{j=1}^n x_j + \sum_{(i, j): i<j} (d(i,j) + d(j,i)).
\]
For $i<j$ such that $y_i < y_j$, the shorter job is scheduled first and hence $d(i,j) + d(j,i) = x_i + 0$, but for job pairs that are wrongly predicted, the longer job is scheduled first, so $d(i,j) + d(j,i) = 0 + x_j$. This yields 
\begin{align*}
\ALG
&= \sum_{j=1}^n x_j + \sum_{\substack{(i, j): i<j\\y_i < y_j}} x_i + \sum_{\substack{(i, j): i<j\\y_i \geq y_j}} x_j 
\enspace = \enspace \sum_{j=1}^n x_j + \sum_{(i, j): i<j} x_i + \sum_{\substack{(i, j): i<j\\y_i \geq y_j}} (x_j-x_i) \\
& \leq \sum_{j=1}^n x_j + \sum_{(i, j): i<j} x_i + \sum_{\substack{(i, j): i<j\\y_i \geq y_j}} \eta_i + \eta_j
 = \OPT + \sum_{\substack{(i, j): i<j\\y_i \geq y_j}} \eta_i + \eta_j
 \leq  \OPT + (n-1)\eta,
\end{align*}
which yields $\frac{\ALG}{\OPT} \leq 1 + \frac{(n-1)\eta}{\OPT}$.  Now, using our assumption that all jobs have length at least 1, we have $\OPT \geq \frac{n(n+1)}{2}$. This yields an upper bound of $ 1 + \frac{2(n-1)\eta}{n(n+1)} <  1 + \frac{2\eta}{n}$ on the competitive ratio of SPJF.
\end{proof}
We give an example showing that this bound is asymptotically tight.  Suppose that there are $n-1$ jobs with processing times $1$ and one job with processing time $1+\epsilon$ and suppose the predicted lengths are $y_j=1$ for all jobs. Then $\eta = \epsilon$, $\OPT = \frac{n(n+1)}{2} + \epsilon$, and, if SPJF happens to schedule the longest job first, increasing the completion time of $n-1$ jobs by $\epsilon$ each, $\ALG = \OPT + (n-1)\epsilon$. This gives the ratio of $\frac{\ALG}{\OPT} = 1 + \frac{2(n-1)\eta}{n(n+1)+2\epsilon}$, which approaches the bound in Lemma~\ref{spjf} as $n$ increases and $\epsilon$ decreases.

Finally, we bound the performance of the preferential round-robin algorithm.
\begin{theorem}\label{prr}
The preferential round-robin algorithm with parameter $\lambda \in (0, 1)$ has competitive ratio at most $\min\{\frac{1}{\lambda} (1 + \frac{2\eta}{n}), \frac{2}{1 - \lambda}\}$. In particular, it is $\frac{2}{1 - \lambda}$-robust and $\frac{1}{\lambda}$-consistent.
\end{theorem}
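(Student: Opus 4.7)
The plan is to realize that Theorem \ref{prr} follows almost immediately by plugging the two ingredient lemmas into Lemma \ref{combine}. The preferential round-robin algorithm is defined as the combination (in the sense of Lemma \ref{combine}) of SPJF and RR at rates $\lambda$ and $1-\lambda$ respectively, so the task reduces to checking the hypotheses of the combining lemma and substituting the competitive ratios.

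First I would verify that both component algorithms are monotonic. SPJF's monotonicity was already argued in the paragraph introducing it: the schedule depends only on the predicted values $y_j$, which are fixed inputs, so decreasing any actual $x_j$ only makes every completion time no larger. Round-robin's monotonicity was stated just before Lemma \ref{combine} and is standard: if some $x_j$ decreases, job $j$ (and hence no later completions) can only finish sooner, while the order of completions of the other jobs is unaffected. Thus the hypothesis of Lemma \ref{combine} is met.

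Next I would identify the two competitive ratios. By Lemma \ref{spjf}, SPJF has competitive ratio $\alpha = 1 + \tfrac{2\eta}{n}$, and by the classical result of Motwani et al.~\cite{motwani1994nonclairvoyant}, RR has competitive ratio $\beta = 2$. Applying Lemma \ref{combine} with these values yields a combined competitive ratio of
\[
\min\!\left\{\frac{\alpha}{\lambda},\ \frac{\beta}{1-\lambda}\right\}
\;=\;
\min\!\left\{\frac{1}{\lambda}\!\left(1+\frac{2\eta}{n}\right),\ \frac{2}{1-\lambda}\right\},
\]
which is exactly the stated bound. Finally, specializing gives consistency and robustness: plugging $\eta = 0$ into the first term yields the $\tfrac{1}{\lambda}$-consistency, while the second term $\tfrac{2}{1-\lambda}$ holds for all $\eta$ and gives the robustness guarantee.

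There is essentially no obstacle here beyond confirming that the PRR algorithm in the theorem statement coincides with the parallel-execution construction in Lemma \ref{combine}, which it does by definition, and noting that the SPJF bound of Lemma \ref{spjf} is expressed as a ratio (rather than additive error), so it feeds directly into Lemma \ref{combine} without modification.
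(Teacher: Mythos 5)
Your proposal is correct and matches the paper's own proof, which likewise obtains the bound by plugging the SPJF ratio from Lemma~\ref{spjf} and the round-robin ratio of $2$ into the combining Lemma~\ref{combine}. The extra care you take in verifying monotonicity of both components is exactly the groundwork the paper establishes in the surrounding text.
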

\begin{proof}
This follows from the competitive ratio of SPJF (Lemma~\ref{spjf}) and the competitive ratio of 2 for round-robin, and by combining the two algorithms using Lemma~\ref{combine}.
\end{proof}

Setting $\lambda > 0.5$ gives an algorithm that beats the round-robin ratio of $2$ in the case of sufficiently good predictions. For the special case of zero prediction errors (or, more generally, if the order of jobs sorted by $y_j$ is the same as that sorted by $x_j$), we can obtain an improved competitive ratio of $\frac{1+\lambda}{2\lambda}$ via a more sophisticated analysis.


\begin{theorem}
\label{thm:prr-consistency} The preferential round-robin algorithm with parameter $\lambda \in (0, 1)$ has competitive ratio at most $(\frac{1+\lambda}{2\lambda})$ when $\eta = 0$.
\end{theorem}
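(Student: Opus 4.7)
My plan is to analyze PRR explicitly, exploiting the fact that when $\eta = 0$ the SPJF component processes jobs in the true order of processing times. Index the jobs so that $x_1 \le x_2 \le \cdots \le x_n$, let $C_k$ denote PRR's completion time of job $k$, set $C_0 = 0$, and let $r_i = C_i - C_{i-1}$. Throughout the interval $[C_{i-1}, C_i]$ there are $n-i+1$ unfinished jobs; SPJF devotes all of its rate $\lambda$ to job $i$, while RR splits rate $1-\lambda$ equally among the unfinished jobs, so the total processing rate is $1$ at every moment until the last job completes.

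The first step is a work-accounting identity. Since the total rate is $1$, the total work completed by time $C_k$ equals $C_k$. That work decomposes as $O_k := x_1 + \cdots + x_k$ from the already-completed jobs, plus the common amount $(1-\lambda)\sum_{i=1}^k r_i/(n-i+1)$ of RR work that each of the $n-k$ still-unfinished jobs has received (every such job only gets RR, since SPJF was occupied with jobs $1,\dots,k$). This yields
\[
C_k \;=\; O_k + (n-k)(1-\lambda)\sum_{i=1}^{k} \frac{r_i}{n-i+1}.
\]
Summing over $k$, swapping the order of summation, and using the elementary identity $\sum_k C_k = \sum_i r_i (n-i+1)$, a brief calculation yields
\[
\sum_{k=1}^{n} O_k \;=\; \sum_{i=1}^{n} r_i\Bigl[(n-i+1) - \tfrac{1-\lambda}{2}(n-i)\Bigr].
\]

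The second step reduces the desired bound $\sum_k C_k \le \frac{1+\lambda}{2\lambda}\sum_k O_k$ to a coefficient-wise comparison. Since every $r_i > 0$, it suffices to verify, for each $i$, that
\[
(n-i+1) \;\le\; \tfrac{1+\lambda}{2\lambda}\Bigl[(n-i+1) - \tfrac{1-\lambda}{2}(n-i)\Bigr].
\]
Setting $m = n-i \ge 0$ and simplifying, this collapses to $2(m+1) \ge (1+\lambda)m$, which holds trivially since $1+\lambda < 2$. The main (and really the only) obstacle in this plan is getting the partial-work bookkeeping of step one exactly right; once that identity is in hand the algebraic reduction, the per-coefficient inequality, and hence the competitive bound follow essentially mechanically.
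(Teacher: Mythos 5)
Your proof is correct, but it takes a genuinely different route from the paper's. The paper decomposes $\ALG$ into pairwise delays $d(i,j)$ (the amount of job $i$ processed before job $j$ completes), bounds the length $t_k$ of phase $k$ by $x_k/\lambda$ using the fact that the current job runs at rate at least $\lambda$, and then charges the resulting ``backward'' delays $\sum_{i<j} d(j,i) \le \sum_k \frac{x_k(1-\lambda)(n-k)}{2\lambda}$ against $\OPT = \sum_k (n-k+1)x_k$. You instead use a work-conservation argument: since the machine runs at total rate $1$, the identity $C_k = O_k + (n-k)(1-\lambda)\sum_{i=1}^{k} r_i/(n-i+1)$ holds exactly, and summing it expresses $\OPT = \sum_k O_k$ exactly in terms of the phase lengths $r_i$, after which the bound reduces to a coefficient-wise inequality that uses only $1+\lambda < 2$. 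Your version avoids the intermediate estimate $t_k \le x_k/\lambda$ entirely and keeps everything as an identity until the last step, which is arguably cleaner; the paper's version reuses the same $d(i,j)$ bookkeeping as its analysis of SPJF (Lemma~\ref{spjf}), which makes the two arguments uniform. One point you assert without justification and should make explicit: that throughout $[C_{i-1},C_i]$ job $i$ is the current job and exactly jobs $i,\dots,n$ are unfinished, i.e., that PRR completes the jobs in sorted order. This needs the short inductive argument the paper gives (while jobs $i<j$ are both active, job $i$ runs at a strictly higher rate and has $x_i \le x_j$, so it finishes first and is current from $C_{i-1}$ until its completion); without it the work-accounting identity in your first step is not licensed.
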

\begin{proof}
Suppose w.l.o.g.\ that the jobs are sorted in non-decreasing job lengths (both actual and predicted), i.e.\ $x_1 \leq \cdots \leq x_n$ and $y_1 \leq \cdots \leq y_n$.
Since the optimal solution schedules the jobs sequentially, we have
\begin{align}
\OPT &= \sum_{j=1}^n (n-j+1) x_j = \sum_{j=1}^n x_j + \sum_{(i, j): i<j} x_i. \label{eq:opt1}
\end{align}



We call a job \emph{active} if it has not completed yet. When there are $k$ active jobs, the preferential round-robin algorithm executes all active jobs at a rate of $\frac{1-\lambda}{k}$, and the active job with the shortest predicted processing time (we call this job \emph{current}) at an additional rate of $\lambda$.
Note that each job $j$ finishes while being the current job. This can be shown inductively: suppose job $j-1$ finishes at time $t$. Then by time $t$, job $j$ has received strictly less processing than $j-1$, but its size is at least as big. So it has some processing remaining, which means that it becomes current at time $t$ and stays current until completion. 
Let \emph{phase} $k$ of the algorithm denote the interval of time when job $k$ is current.

For any pair of jobs $(i,j)$, define $d(i,j)$ as the amount of job $i$ that has been executed before the completion time of $j$. In other words, $d(i,j)$ is the amount of time by which $i$ delays $j$. We can now express the cost of our algorithm as 
\begin{align}
\ALG &= \sum_{j=1}^n x_j + \sum_{(i, j): i<j} d(i,j) + \sum_{(i, j): i<j} d(j,i). \label{eq:PRR1}
\end{align}

If $i<j$, as job $i$ completes before job $j$, we have $d(i,j) = x_i$.  To compute the last term in (\ref{eq:PRR1}), consider any phase $k$, and let $t_k$ denote its length. In this phase, the current job $k$ executes at a rate of at least $\lambda$, which implies that $t_k \leq \frac{x_k}{\lambda}$.
During phase $k$, jobs $\{k+1, ..., n\}$ receive  $\frac{t_k (1-\lambda)}{n-k+1}$ amount of processing each. Such a job $k+i$ delays $i$ jobs with smaller indices, namely $\{k, ..., k+i-1\}$. Let $d^k(i,j)$ denote the delay in phase $k$:
$$
\sum_{(i, j): i<j} d^k(j,i) ~=~ \frac{t_k(1-\lambda)}{n-k+1} \cdot \sum_{i=1}^{n-k} i ~=~ \frac{t_k (1-\lambda)(n-k)}{2} ~\leq~ \frac{x_k (1-\lambda)(n-k)}{2\lambda}.
$$
Substituting back into Equation (\ref{eq:PRR1}),
\begin{align*}
\ALG ~&=~ \sum_{j=1}^n x_j + \sum_{(i, j): i<j} d(i,j) + \sum_{(i, j): i<j} \sum_{k=1}^n d^k(j,i)\\
&\leq~ \sum_{j=1}^n x_j + \sum_{(i, j): i<j} x_i + \sum_{k=1}^n  \frac{x_k (1-\lambda)(n-k)}{2\lambda}
=~ \OPT + \sum_{k=1}^n  \frac{x_k (1-\lambda)(n-k)}{2\lambda}\\
~&\leq~ \OPT + \frac{1-\lambda}{2\lambda} \sum_{k=1}^n  x_k (n-k+1)
=~ \OPT + \frac{1-\lambda}{2\lambda}\OPT ~=~ \frac{1+\lambda}{2\lambda} \OPT,
\end{align*}
using Equation (\ref{eq:opt1}) for the last line.
\end{proof}
\section{Experimental results}
\label{sec:experiments}

\subsection{Ski rental}
We test the performance of our algorithms for the ski rental problem via simulations. For all experiments, we set the cost of buying to $b = 100$ and the actual number of skiing days $x$ is a uniformly drawn integer from $[1, 4b]$.
The predicted number of days $y$ is simulated as $y = x + \epsilon$ where $\epsilon$ is drawn from a normal distribution with mean 0 and standard deviation $\sigma$. We consider both randomized and deterministic algorithms for two different values of the trade-off parameter $\lambda$. Recall that by setting $\lambda = 1$, our algorithms ignore the predictions and reduce to the known optimal algorithms (deterministic and randomized, respectively)~\cite{karlin1994competitive}. We set $\lambda = 0.5$ for the deterministic algorithm that guarantees a worst-case competitive ratio of 3. In order to obtain the same worst-case competitive ratio, we set $\lambda = \ln(\nicefrac{3}{2})$ for the randomized algorithm. For each $\sigma$, we plot the average competitive ratio obtained by each algorithm over 10000 independent trials in Figure~\ref{fig:ski}.
We observe that even for rather large prediction errors, our algorithms perform significantly better than their classical counterparts. In particular, even our deterministic algorithm that uses the predictions performs better than the classical randomized algorithm for errors up to a standard deviation of $2b$.

\subsection{Non-clairvoyant scheduling}

\begin{wraptable}{R}{0.4\textwidth}
\vspace*{-0.4cm}
\centerline{
\begin{tabular}{c c c c c}
\hline
N  & min  &  max   & mean & $\sigma$ \\ \hline
50 & 1    &  22352 & 2168    & 5475.42   \\ \hline
\end{tabular}
}
\caption{Statistics of job lengths.}
\label{tbl:dataset}
\end{wraptable}
We generate a synthetic dataset with 50 jobs where the processing time of each job is sampled independently from a Pareto distribution with an exponent of $\alpha = 1.1$.  (As observed in prior work \cite{crovella1997self,harchol1997exploiting,bansal2001analysis}, job size distributions in a number of settings are well-modeled by a Pareto distribution with $\alpha$ close to 1.)  Pertinent characteristics of the generated dataset are presented in Table~\ref{tbl:dataset}. In order to simulate predicted job lengths and compare the performance of the different algorithms with respect to the errors in the prediction, we set the predicted job length $y_i = x_i + \epsilon_i$, where $\epsilon_i$ is drawn from a normal distribution with mean zero and standard deviation $\sigma$.

Figure~\ref{fig:scheduling_result} shows the competitive ratio of the three algorithms versus varying prediction errors. For a parameter $\sigma$, we plot the average competitive ratio over 1000 independent trials where the prediction error has the specified standard deviation. As expected, the na\"{i}ve strategy of scheduling jobs in non-decreasing order of their predicted job lengths (SPJF) performs very well when the errors are low, but quickly deteriorates as the errors increase. In contrast, our preferential round-robin algorithm (with $\lambda = 0.5$) performs no worse than round-robin even when the predictions have very large error.

\begin{figure}[t]
\centering
\begin{subfigure}{0.47\textwidth}
\includegraphics[width=\textwidth]{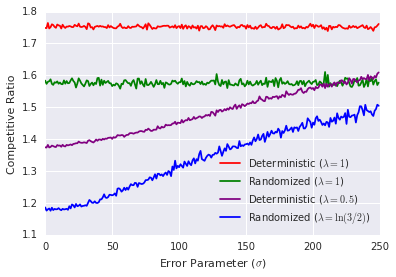}
\caption{Ski Rental}
\label{fig:ski}
\end{subfigure}
~
\begin{subfigure}{0.47\textwidth}
\includegraphics[width=\textwidth]{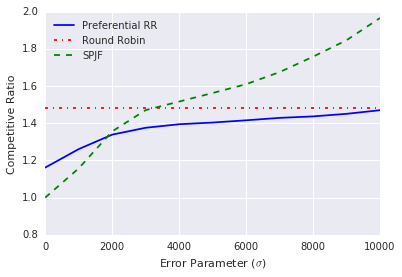}
\caption{Non-clairvoyant Scheduling}
\label{fig:scheduling_result}
\end{subfigure}
\caption{Average competitive ratio over varying prediction errors.}
\end{figure}

\section{Conclusions}

In this paper we furthered the study of using ML predictions to provably improve the worst-case performance of online algorithms.  There are many other important online algorithms including $k$-server, portfolio optimization, etc, and it will be interesting to see if predictions can be useful for them as well.  Another research direction would be to use the error distribution of the ML predictor to further improve the bounds.

\section*{Acknowledgements}
We thank Chenyang Xu for pointing out a bug in the conference version~\cite{KPS18} and thank Erik Vee for his help in fixing the bug.

\bibliographystyle{plain}
\bibliography{bibfile}

\appendix
\section{Deferred Proofs}

We first state a few simple observations that will be useful.

\begin{lemma} 
\label{lem:helper}
For $0 < x \leq 1$, 
\begin{enumerate}
\item[(i)] $e^{x - \frac{1}{x}} \leq 1$.
\item[(ii)] $\frac{x}{e} - e^{-\frac{1}{x}} \geq 0$.
\item[(iii)] $1 - \frac{1}{x} + \frac{e^{-x}}{x} \geq \frac{x}{e}$.
\end{enumerate}
\end{lemma}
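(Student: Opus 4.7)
The plan is to treat each of the three inequalities as an elementary calculus exercise once we rearrange them into a convenient form. Parts (i) and (ii) reduce to monotonicity statements about a single-variable function on $(0,1]$, while part (iii) will require a second-derivative analysis.

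For (i), I would simply observe that since $t \mapsto e^t$ is monotone increasing, $e^{x - 1/x} \leq 1$ is equivalent to $x - 1/x \leq 0$, i.e.\ $x^2 \leq 1$, which is immediate for $0 < x \leq 1$.

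For (ii), both sides are positive, so taking logarithms converts the inequality to $\ln(x/e) \geq -1/x$, i.e.\ $f(x) := \ln x + 1/x \geq 1$. A direct computation gives $f'(x) = 1/x - 1/x^2 = (x-1)/x^2 \leq 0$ on $(0,1]$, so $f$ is nonincreasing on this interval and attains its minimum at $x=1$, where $f(1) = 1$. This gives the bound.

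For (iii), which I expect to be the main obstacle, I would multiply both sides by the positive quantity $x$ to clear denominators and study $g(x) := x - 1 + e^{-x} - x^2/e$, aiming to show $g \geq 0$ on $(0,1]$. First I would note the boundary values $g(0) = 0$ and $g(1) = 0$, so a mean-value/shape argument will be needed. Differentiating gives $g'(x) = 1 - e^{-x} - 2x/e$ with $g'(0) = 0$, and $g''(x) = e^{-x} - 2/e$, which is positive for $x < 1 - \ln 2$ and negative for $x > 1 - \ln 2$. Hence $g'$ is first strictly increasing and then strictly decreasing on $[0,1]$; since $g'(0)=0$, the function $g'$ becomes positive, then decreases, with $g'(1) = 1 - 3/e < 0$, so $g'$ changes sign exactly once on $(0,1)$. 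Therefore $g$ is first increasing then decreasing on $[0,1]$, and being $0$ at both endpoints it must be nonnegative throughout, which is the desired bound. The trickiest point is identifying the correct monotonicity pattern of $g$ from $g''$; after that the endpoint computations close the argument cleanly.
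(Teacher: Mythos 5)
Your proof is correct in all three parts, but parts (ii) and (iii) take a genuinely different route from the paper. For (ii), the paper avoids calculus entirely: it rewrites the claim as $x - e^{1-1/x} \geq 0$ and applies the algebraic bound $e^{y} \leq \frac{1}{1-y}$ at $y = 1 - \frac{1}{x}$, which collapses to exactly $x - x = 0$; your logarithmic reformulation $\ln x + \frac{1}{x} \geq 1$ with a monotonicity argument is an equally clean one-derivative alternative. For (iii), the paper keeps the function $h(x) = 1 - \frac{1}{x} + \frac{e^{-x}}{x}$ as is, proves $h'' \leq 0$ via $e^{x} \geq 1 + x + \frac{x^2}{2}$, and then invokes concavity to place $h$ above the chord joining $(0,0)$ and $(1, \frac{1}{e})$ — a single inequality plus one structural observation. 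You instead clear the denominator, study $g(x) = x - 1 + e^{-x} - \frac{x^2}{e}$, and run a full sign analysis of $g''$, $g'$, and the endpoint values to conclude $g$ is unimodal with $g(0) = g(1) = 0$. Your version trades the paper's slicker concavity-plus-chord argument for more routine but more verbose bookkeeping (you must verify $g'(0)=0$, locate the sign change of $g''$ at $1 - \ln 2$, check $g'(1) = 1 - \frac{3}{e} < 0$, and argue the single sign change of $g'$); the paper's version buys brevity at the cost of the slightly less obvious step of differentiating $\frac{e^{-x}}{x}$ twice and recognizing the Taylor bound. Both are complete and correct.
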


\begin{proof}
(i) For $x \in (0, 1]$, we have $x \leq 1/x \implies e^x \leq e^{1/x}$ and hence  
$e^{x - \frac{1}{x}} \leq 1$.

(ii) For any $y \leq 1$, we have $1-y \leq e^{-y} \implies e^y \leq \frac{1}{1-y}$. Showing (ii) is equivalent to showing $x - e^{1-\frac{1}{x}} \geq 0$. But since $1 - \frac{1}{x} \leq 1$, we can substitute $y = 1-\frac{1}{x}$ to get
\[x - e^{1-\frac{1}{x}} \geq x - \frac{1}{1 - (1 - \frac{1}{x})} = 0\]

(iii) We first show that $h(x) := 1 - \frac{1}{x} + \frac{e^{-x}}{x}$ is concave for $x \geq 0$ (since $\lim_{x \rightarrow 0} h(x) = 0$, we define $h(0) = 0$ to make it continuous at $0$).  Indeed, consider $h''(x) = \frac{e^{-x} (x^2 + 2x + 2 - 2e^x)}{x^3}$.  Note that for all $x \geq 0$, we have $e^x \geq 1 + x + x^2/2$, and hence we have $h''(x) \leq 0$. Thus $h(x)$ is concave in the range $x \geq 0$.  By concavity, we get that for all $0 < x \leq 1$, $h(x) \geq (1-x) \cdot h(0) + x \cdot h(1) = \frac{x}{e}$ as desired.
\end{proof}

\begin{lemma}
\label{lem:erratum}
    Let $b \geq 2$ be an integer and let $\lambda \in (1/b, 1)$ be a real number. Then,
    \[
\frac{1/\lambda + 1/b}{1 - e^{-1/\lambda}}
\leq
\frac{1 + 1/b}{1 - e^{-(\lambda - 1/b)}}. 
\]
\end{lemma}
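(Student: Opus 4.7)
My plan is to reduce the claim, after cross-multiplying by the (positive) denominators $1 - e^{-1/\lambda}$ and $1 - e^{-(\lambda - 1/b)}$, to the equivalent inequality
\[
F(\epsilon) \;:=\; \Bigl(1 - \tfrac{1}{\lambda}\Bigr) + \Bigl(\tfrac{1}{\lambda} + \epsilon\Bigr) e^{\epsilon - \lambda} - (1 + \epsilon)\, e^{-1/\lambda} \;\geq\; 0
\]
evaluated at $\epsilon = 1/b$. I would treat $\epsilon$ as a continuous parameter in $[0, \lambda)$ (this is where the hypothesis $\lambda > 1/b$ gets used) and prove the bound in two steps: first the base case $F(0) \geq 0$, then monotonicity $F'(\epsilon) \geq 0$.

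For the base case, a brief computation gives $\lambda F(0) = (\lambda - 1 + e^{-\lambda}) - \lambda\, e^{-1/\lambda}$. Part~(iii) of Lemma~\ref{lem:helper} applied at $x = \lambda$, multiplied through by $\lambda$, yields $\lambda - 1 + e^{-\lambda} \geq \lambda^2/e$, and part~(ii) multiplied by $\lambda$ yields $\lambda^2/e \geq \lambda\, e^{-1/\lambda}$. Chaining these two inequalities gives $\lambda F(0) \geq 0$, and hence $F(0) \geq 0$.

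For the monotonicity step, direct differentiation gives
\[
F'(\epsilon) \;=\; e^{\epsilon - \lambda}\Bigl(1 + \tfrac{1}{\lambda} + \epsilon\Bigr) - e^{-1/\lambda},
\]
which is plainly increasing in $\epsilon$, so it suffices to verify $F'(0) \geq 0$. That amounts to $(1 + 1/\lambda)\,e^{1/\lambda - \lambda} \geq 1$, which holds immediately because $\lambda \leq 1$ forces $e^{1/\lambda - \lambda} \geq 1$ and $1 + 1/\lambda \geq 2$ (a very weak consequence of part~(i) of the helper lemma). Combining the two steps, $F(1/b) \geq F(0) \geq 0$, which is the desired inequality.

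The main obstacle is the base case $F(0) \geq 0$, which corresponds to the ``$b \to \infty$'' limit $(1 - e^{-\lambda})/\lambda \leq 1 - e^{-1/\lambda}$ and is tight at $\lambda = 1$; the only clean route I see is the two-step chain through $\lambda^2/e$ supplied by parts~(iii) and~(ii) of Lemma~\ref{lem:helper}, and in fact this is exactly why those two bounds were stated with the constant $1/e$. Once the base case is in hand, the monotonicity step is comfortably slack.
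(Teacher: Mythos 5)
Your proof is correct, and it takes a mildly but genuinely different route to the same destination. Both arguments begin by clearing the (positive) denominators to reach the same expression — your $F(\epsilon)$ is exactly the paper's rearranged left-hand side with $y = \epsilon$ — and both funnel into the identical hard case $F(0) = 1 - e^{-1/\lambda} - 1/\lambda + e^{-\lambda}/\lambda \geq 0$, which the paper also dispatches by chaining parts~(iii) and~(ii) of Lemma~\ref{lem:helper}. The difference is in how one reduces to that base case: the paper first applies $e^{y} \geq 1+y$ to turn the expression into a quadratic $f(y)$ in $y = 1/b$, computes the vertex, and uses Lemma~\ref{lem:helper}(i) to show the unconstrained minimizer is nonpositive, so the constrained minimum over $y \geq 0$ sits at $y=0$. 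You instead differentiate $F$ directly, observe $F'(\epsilon) = e^{\epsilon-\lambda}(1 + 1/\lambda + \epsilon) - e^{-1/\lambda}$ is increasing with $F'(0) = (1+1/\lambda)e^{-\lambda} - e^{-1/\lambda} \geq 0$, and conclude $F$ is nondecreasing on $[0,\lambda)$. Your version avoids the linearization step entirely and is arguably cleaner (monotonicity of $F$ is a slightly stronger statement than what the paper's quadratic relaxation yields, since the relaxation loses a little for $y>0$); the paper's version keeps everything in terms of an explicit polynomial, at the cost of one extra inequality. Your identification of where the hypothesis $\lambda > 1/b$ enters (positivity of the second denominator, hence the legitimacy of cross-multiplying) and of where the bound is tight ($\lambda = 1$ in the base case) are both accurate.
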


\begin{proof}

    For convenience, let $y = 1/b$ and rearrange
the terms so that the lemma statement is equivalent to showing the following, subject to $\lambda \in (y, 1)$. 

\begin{equation*}
(1+y)(1-e^{-1/\lambda})-(1/\lambda+y) + (1/\lambda+y) e^{y-\lambda}
\geq 0.
\end{equation*}

Note that we used $\lambda \geq y$ here while rearranging the terms.
Using $e^{y} \geq 1+y$, it instead suffices to show the following inequality.

\begin{equation}
\label{eq:threea}
(1+y)(1-e^{-1/\lambda})-(1/\lambda+y) + (1/\lambda+y)(1+y) e^{-\lambda}
\geq 0.
\end{equation}

The LHS of (\ref{eq:threea}) is a quadratic in $y$, written as:

\begin{equation}
\label{eq:foura}
f(y) := 
y^2 (e^{-\lambda}) + y((1/\lambda+1)e^{-\lambda} - e^{-1/\lambda}) + (1 - e^{-1/\lambda} - 1/\lambda + e^{-\lambda} / \lambda).
\end{equation}

The goal is to show (\ref{eq:foura}) is non-negative when $1 \geq \lambda \geq y \geq 0$.  To do this, we minimize $f(y)$ subject to $0 \leq y \leq \lambda$.  

The minimum of $f(y)$ is attained at
\[
\partial f/\partial y = 0 =
2y (e^{-\lambda}) + (1/\lambda+1)e^{-\lambda} - e^{-1/\lambda},
\]
yielding
\[
y_{\min} = \frac{e^{\lambda - 1/\lambda} - (1/\lambda+1)}{2} \stackrel{\mathrm{Lemma~\ref{lem:helper}(i)}}{\leq} 0.
\]
Consequently, subject to the constraint that $y \geq 0$, the minimum of $f(y)$ is attained at $y = 0$. Plugging in $y = 0$ in \eqref{eq:foura}, we get
\begin{align*}
    f(y) \geq f(0) = 1 - e^{-\frac{1}{\lambda}} - \frac{1}{\lambda} + \frac{e^{-\lambda}}{\lambda}    \stackrel{\mathrm{Lemma~\ref{lem:helper}(iii)}}{\geq} \frac{\lambda}{e} - e^{-\frac{1}{\lambda}} \stackrel{\mathrm{Lemma~\ref{lem:helper}(ii)}}{\geq} 0.
\end{align*}
This completes the proof.
\end{proof}

\end{document}